\newtheorem{propo}{Proposition}
\newtheorem{lem}{Lemma}
\newcommand{\ket}[1]{ | #1 \rangle}
\renewcommand{\epsilon}{\varepsilon}
\begin{document}

\title{Detecting genuine multipartite entanglement of pure states with bipartite correlations}

\author{Marcin Markiewicz} 
\affiliation{Institute of Theoretical Physics and Astrophysics, University of Gda\'nsk, 80-952 Gda\'nsk, Poland}
\author{Wies\l aw~Laskowski} 
\affiliation{Institute of Theoretical Physics and Astrophysics, University of Gda\'nsk, 80-952 Gda\'nsk, Poland}
\author{Tomasz Paterek}
\affiliation{School of Physical and Mathematical Sciences, Nanyang Technological University, Singapore}
\affiliation{Centre for Quantum Technologies, National University of Singapore, Singapore}
\author{Marek \.Zukowski}
\affiliation{Institute of Theoretical Physics and Astrophysics, University of Gda\'nsk, 80-952 Gda\'nsk, Poland}

\begin{abstract}
Monogamy of bipartite correlations leads, for arbitrary  pure multi-qubit states,  to simple conditions able to indicate various types of  multipartite entanglement by being capable to  exclude the possibility of $k$-separability.
\end{abstract}

\pacs{03.67.Mn}

\maketitle

%\section{Introduction}

For bipartite systems the phenomenon of quantum entanglement \cite{HHH, PAN} manifests itself in  correlations.
One might expect that genuinely $n$-partite entanglement gives rise to non-vanishing correlations between all $n$ subsystems.
This  is incorrect, at least when correlations are quantified as average values of a product of local measurement results \cite{DAG,BENNETT2012}
(for a discussion on quantum correlations without classical correlations, see {\em e.g.} \cite{RMP}).
Thus, in order to detect genuine multipartite entanglement of certain states one has to rely on correlations between smaller number of subsystems \cite{WIESNIAK}.

Here we discuss global features of multiparty qubit pure states, which can be deduced from  their bipartite correlations. We shall use  the property of monogamy of correlations \cite{KPRLK10}.
Another approach has recently been put forward by W\"urflinger \emph{et al.} \cite{WBAGV12} who have shown that some non-entangled reduced density operators can be linked with global entangled states.

Monogamy of quantum correlations can be used for entanglement detection \cite{SCHMIDT}.
We will  use  it to derive a criterion for \emph{genuine} multipartite entanglement.
Monogamy was also employed in studies  of quantum marginal problem, {\em i.e.} conditions for existence of a global quantum state given its marginals \cite{K04}, security of quantum key distribution \cite{VALERIO},  and to show  that correlations between macroscopic measurements ought to be classical~\cite{MACRO}. It leads to efficient methods of solving strongly correlated multipartite quantum lattice systems \cite{VMC08}.

%\section{Correlation tensors and basic definitions}
States of $n$ qubits (two-level quantum systems) have density matrices of the following form:
\begin{equation}
\rho=\frac{1}{2^n}\sum_{\mu_1,...,\mu_n=0,1,2,3}T_{\mu_1... \mu_n}\sigma_{\mu_1}\otimes...\otimes \sigma_{\mu_n},
\label{cortensor}
\end{equation}
where $\sigma_{\mu_k} \in \{\openone, \sigma_1, \sigma_2, \sigma_3 \}$. $\sigma_1, \sigma_2$ and $\sigma_3$ are the Pauli matrices of the $k$th observer related to the $\hat x$, $\hat y$ and $\hat z$ local directions of her local Cartesian basis (we allow each observer to choose her  {\em own} coordinates, 
this will play a {\em crucial} role).
The values $T_{\mu_1...\mu_n}$ are given by a correlation function for measurements of the Pauli operators:
$T_{\mu_1...\mu_n}=\textrm{Tr} \left( \rho\, \sigma_{\mu_1}\otimes...\otimes \sigma_{\mu_n} \right)$.
We call the whole object endowed with components $T_{\mu_1... \mu_n} $  an extended correlation tensor  $\hat T$. 
Under rotations of local Cartesian coordinates, its components with $d$ zeros  transform like $(n-d)$-order tensors. The values of $T_{i0...0}$, {\em i.e.} for all indices $0$ except $i=1,2,3$ for the first qubit indices, give us the components of the Bloch vector $\vec{b}^{1}$ defining the reduced desity matrix of the qubit. Similar identifications with Bloch vectors, denoted as $\vec{b}^{(k)}$, hold for all other qubits.

A pure $n$-partite state $|\psi\rangle$ is called a $k$-product one, if it can be represented as a tensor product of $k$ pure $r_m$-partite states:
\begin{equation}
|\psi_{k\mathrm{-prod}}\rangle=|\psi_{r_1}\rangle\otimes\ldots\otimes|\psi_{r_k}\rangle.
\label{kprodukt}
\end{equation}
Of course $\sum_{m=1}^k r_m=n$.
There are different types of $k$-product states corresponding to different ways of partitioning $n$ into a sum of $k$ integers.
We will refer to a  definite type of $k$-product state as $(r_1+\ldots+r_k)$-partition product state.
For example, a $n=4$ partite state can be a $2$-product in two ways, $(3+1)$ and $(2+2)$, and it can be a $3$-product in one way, $(2+1+1)$.
It is clear that if a state is not $k$-product it also cannot be $k'$-product for $k' \ge k$.
If a state is not $k$-product, it can be at most $(k-1)$-product.
The number of mutually entangled particles is minimized when entanglement is distributed as uniformly as possible  (the maximal number of particles share an entangled state).
Thus,  a state that is not $k$-product has a subset of at least $\left\lceil n/(k-1)\right\rceil$ mutually entangled particles.
For example, if a $7$-partite state is not $3$-product, it can be at most biseparable. The number of mutually entangled particles is  minimal for partition $(3+4)$. Thus, we have entanglement between at least $4$ particles.

%\section{Properties of bipartite correlations of $n$-partite state}

The quantity which plays the main role in our method is the sum of squares of all possible bipartite correlations.  Monogamy relations lead to the following property.
\begin{propo}
For any $n$-qubit state (pure or mixed) the following tight bound holds:
\begin{equation}
\label{M ident}
\mathcal M=\sum_{1\leq k<l\leq n}\mathcal M_{kl}\leq \begin{cases} 2 &\mbox{if } n=2 \\
{n \choose 2} & \mbox{if } n\geq 3 \end{cases},
\end{equation}
with
\begin{equation}
\label{Mkl}
\mathcal M_{kl}=\sum_{i,j=1,2}T^2_{0,\ldots,0,i_{(k)},0,\ldots,0,j_{(l)},0,\ldots,0},  
\end{equation}
where subscripts $(k)$ and $(l)$ denote $k$-th and $l$-th position, and  $i,j$ are two pairs of  Cartesian coordinate indices. For simplicity, we shall assume that they always represent coordinate indexes related with directions $x$ and $y$. 
\label{teor M}
\end{propo}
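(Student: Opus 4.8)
The plan is to reduce everything to the correlation-complementarity inequality behind the monogamy relations of Ref.~\cite{KPRLK10}: if $A_1,\dots,A_m$ are Hermitian observables with $A_j^2=\openone$ that pairwise anticommute, then $\sum_{j=1}^{m}\langle A_j\rangle^2\le 1$ for every state. (Indeed $\big(\sum_j c_jA_j\big)^2=\big(\sum_j c_j^2\big)\openone$ by anticommutativity, so the operator norm of $\sum_j c_jA_j$ equals $\sqrt{\sum_j c_j^2}$; taking $c_j=\langle A_j\rangle$ gives $\sum_j\langle A_j\rangle^2\le\sqrt{\sum_j\langle A_j\rangle^2}$.) The numbers entering $\mathcal M_{kl}$ are squared expectation values of the two-qubit Pauli observables $\sigma_a^{(k)}\sigma_b^{(l)}$, with $a,b\in\{1,2\}$ and identity on all other qubits, so the task becomes purely combinatorial: collect the $4\binom n2$ relevant observables into as few mutually anticommuting families as possible. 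The only fact I use about them is that $\sigma_a^{(k)}\sigma_b^{(l)}$ and $\sigma_{a'}^{(k')}\sigma_{b'}^{(l')}$ anticommute precisely when their supports $\{k,l\}$ and $\{k',l'\}$ overlap in exactly one qubit and carry different Pauli matrices there; disjoint supports always commute, and equal supports give a four-cycle (not a clique) under anticommutation.

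For $n\ge 3$ I would \emph{not} partition all $4\binom n2$ observables at once. Instead I bound, for each fixed qubit $k$, the star sum $\mathcal T_k=\sum_{l\neq k}\mathcal M_{kl}$, collecting the $4(n-1)$ observables whose support contains $k$. Ordering the other $n-1$ qubits cyclically as $l_1,\dots,l_{n-1}$, set, for $i=1,\dots,n-1$ with indices modulo $n-1$,
\begin{equation*}
G_i=\big\{\,\sigma_1^{(k)}\sigma_1^{(l_i)},\ \sigma_1^{(k)}\sigma_2^{(l_i)},\ \sigma_2^{(k)}\sigma_1^{(l_{i+1})},\ \sigma_2^{(k)}\sigma_2^{(l_{i+1})}\,\big\}.
\end{equation*}
Since $n-1\ge 2$ we have $l_i\neq l_{i+1}$, so every member of $G_i$ carrying $\sigma_1$ on qubit $k$ shares with every member carrying $\sigma_2$ on $k$ only that one qubit, and differs there; together with the anticommutation inside each of the two ``rows,'' this makes $G_i$ four mutually anticommuting dichotomic observables. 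The $G_i$ partition the $4(n-1)$ star observables of $k$ (the two with $\sigma_1$ on $k$ and second index $l_i$ lie in $G_i$, the two with $\sigma_2$ on $k$ and second index $l_i$ lie in $G_{i-1}$), so correlation complementarity yields $\mathcal T_k=\sum_i\sum_{O\in G_i}\langle O\rangle^2\le n-1$. Summing over the $n$ qubits and using that $\mathcal M_{kl}$ is symmetric, so $\sum_k\mathcal T_k=2\mathcal M$, gives $\mathcal M\le\tfrac12 n(n-1)=\binom n2$; no purity hypothesis is needed, and convexity of $\mathcal M$ in $\rho$ would in any case reduce the claim to pure states.

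For $n=2$ there is a single pair, $\mathcal M=\mathcal M_{12}$, whose four observables split into the anticommuting pairs $\{\sigma_1^{(1)}\sigma_1^{(2)},\sigma_1^{(1)}\sigma_2^{(2)}\}$ and $\{\sigma_2^{(1)}\sigma_1^{(2)},\sigma_2^{(1)}\sigma_2^{(2)}\}$, each bounded by $1$, so $\mathcal M\le 2$. For tightness I would use, for $n\ge 3$, the complete-graph stabilizer state written in its graph-state local frames: its generators $g_k$ act as $\sigma_1$ on qubit $k$ and as $\sigma_3$ on every other qubit, whence $g_kg_l\propto\sigma_2^{(k)}\sigma_2^{(l)}$, so $|\langle\sigma_2^{(k)}\sigma_2^{(l)}\rangle|=1$, $\mathcal M_{kl}\ge 1$ for every pair, and $\mathcal M=\binom n2$; for $n=2$ a Bell state has $\langle\sigma_1^{(1)}\sigma_1^{(2)}\rangle^2+\langle\sigma_2^{(1)}\sigma_2^{(2)}\rangle^2=2$.

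The genuinely delicate point is the combinatorics of the second step. The naive route — a single partition of all $4\binom n2$ observables into $\binom n2$ anticommuting quadruples — is impossible when $n=3$: each such quadruple must take its two ``rows'' from two pairs that share a qubit, which is equivalent to orienting $K_n$ with no vertex of out-degree exactly one, and for $n=3$ the out-degrees must sum to the odd number $3$ while each is at most $2$. Routing the estimate through the per-qubit star sums, whose supports form a sunflower rather than a triangle, is exactly what removes this obstruction and lets one argument cover all $n\ge 3$.
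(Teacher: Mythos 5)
Your proof is correct, and it reaches the bound by a genuinely different combinatorial route than the paper, even though both rest on the same correlation-complementarity inequality for mutually anticommuting dichotomic observables and on the same basic four-element anticommuting sets (your $G_i$ for $n=3$ are exactly the paper's $S'_1,S'_2$). The paper only ever uses the two-term monogamy relation $\mathcal M_{kl}+\mathcal M_{km}\leq 2$ and a derived three-term relation $\mathcal M_{kl}+\mathcal M_{lm}+\mathcal M_{km}\leq 3$, and then assembles the global bound by grouping the ${n \choose 2}$ terms of $\mathcal M$ into pairs sharing a common subsystem, with a separate parity analysis according to whether ${n \choose 2}$ is even or odd. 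You instead prove the stronger intermediate statement that each full star satisfies $\sum_{l\neq k}\mathcal M_{kl}\leq n-1$, via the cyclic partition of the $4(n-1)$ star observables into $n-1$ anticommuting quadruples, and then close with the double-counting identity $\sum_k\sum_{l\neq k}\mathcal M_{kl}=2\mathcal M$. This buys you a uniform argument for all $n\geq 3$ with no case analysis and no three-term inequality, plus a star-monogamy relation of independent interest; the paper's version stays closer to the low-order monogamy relations of its Ref.~[7] but pays for it with the even/odd bookkeeping. Your tightness witness (the complete-graph stabilizer state, for which every $\langle\sigma_2^{(k)}\sigma_2^{(l)}\rangle=1$) is also valid but heavier than needed: the paper observes that the product state $\otimes_k\ket{+}_k$ already saturates the bound, which is the very observation motivating the later switch to preferred local bases. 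Your closing remark that a single global partition into ${n \choose 2}$ anticommuting quadruples fails for $n=3$ is a nice justification for routing the estimate through stars, though it is not needed for the proof itself.
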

Note that ${n \choose 2}$ is  the number of terms $\mathcal M_{kl}$ in $\mathcal M$.

\emph{Proof}.
Our thesis for $n=2$ and $n=3$ follows  from the following monogamy relations, which are direct generalizations of the ones derived in \cite{KPRLK10}:
\begin{eqnarray}
\mathcal M_{kl} &\leq& 2, \quad \textrm{for all } k \neq l, \label{MONO1} \\
\mathcal M_{kl}+\mathcal M_{lm}&\leq& 2, \quad \textrm{for all } k \neq l \neq m, \label{MONO2} \\
\mathcal M_{kl}+\mathcal M_{lm}+\mathcal M_{km}&\leq& 3, \quad \textrm{for all } k \neq l \neq m. \label{MONO3}
\end{eqnarray}

%%%%%%%%%%%%%%%%%%%%%
The relations (\ref{MONO1}) and (\ref{MONO2}) can be derived using the following property of a set of operators shown in Refs.~\cite{KPRLK10,TG,WW1}:
Let $S=\{\hat A_1,\ldots,\hat A_j\}$ be a set of Hermitian, traceless operators fullfilling:
$\hat A_k\hat A_l+\hat A_l\hat A_k=0$ and $\hat A_k^2=\openone$, 
and let $\vec \alpha=(\langle\hat A_1\rangle,\ldots,\langle\hat A_j\rangle)$ be a vector of their expectation values. Then the following holds:
\begin{equation}
||\vec \alpha||^2\leq 1.
\label{pr7}
\end{equation}

Without endagering generality we can make a proof of (\ref{MONO1}) for just two qubits, say $1$-st and $2$-nd. Define two sets of operators:
$
S_1=\{\sigma_1\otimes\sigma_1,
\sigma_1\otimes\sigma_2\}$
 and
$S_2=\{\sigma_2\otimes\sigma_1,
\sigma_2\otimes\sigma_2\}$.
It is straightforward to see that the operators within each set fulfill all the above assumptions. However, 
$
||\vec \alpha_1||^2=T_{11}^2+T_{12}^2$ and 
$||\vec \alpha_2||^2=T_{21}^2+T_{22}^2$. Therefore, 
$
||\vec \alpha_1||^2+||\vec \alpha_2||^2=\mathcal M_{12}\leq 2.
$

For the proof of (\ref{MONO2}) let us limit ourselves to qubits $1,2$ and $3$, and define
\begin{eqnarray}
S'_1=\{\sigma_1\otimes\sigma_1\otimes\openone ,\sigma_1\otimes\sigma_2\otimes\openone, \nonumber \\
\sigma_2\otimes\openone\otimes\sigma_1 , \sigma_2\otimes\openone\otimes\sigma_2   \},\\
S'_2=\{\sigma_2\otimes\sigma_1\otimes\openone ,\sigma_2\otimes\sigma_2\otimes\openone, \nonumber \\
\sigma_1\otimes\openone\otimes\sigma_1 , \sigma_1\otimes\openone\otimes\sigma_2  \}\\  \nonumber
\end{eqnarray}
Both sets fulfill assumptions leading to ineq. (\ref{pr7}), thus after a similar identification of averages of the operators with components of the generalized correlation tensor, we get
$
\mathcal M_{12}+\mathcal M_{13}\leq 2.
$

The inequality (\ref{MONO3}) follows from summing up the following three inequalities, each of the form of (\ref{MONO2}):
\begin{eqnarray}
\mathcal M_{kl}+\mathcal M_{lm}&\leq& 2, \nonumber \\
\mathcal M_{kl}+\mathcal M_{km}&\leq& 2, \nonumber \\
\mathcal M_{lm}+\mathcal M_{km}&\leq& 2.
\end{eqnarray}

For higher $n$ we combine inequalities (\ref{MONO2}) and (\ref{MONO3}) to arrive at the thesis. Let us put  $\mathcal M$ as follows:
\begin{eqnarray}
&&\mathcal M=(\mathcal M_{12}+\mathcal M_{13}+...+\mathcal M_{1n})+(\mathcal M_{23}+...+\mathcal M_{2n})\nonumber\\
&&+\ldots+(\mathcal M_{n-2,n-1}+\mathcal M_{n-2,n})+(\mathcal M_{n-1,n}),
\label{Mx}
\end{eqnarray}
where in every bracket we group the terms that share a common subsystem corresponding to the first index.
Consider first the case of even total number of $\mathcal M_{kl}$ terms in $\mathcal M$, {\em i.e.} an even ${n \choose 2}$.
In the brackets which contain an even number of $\mathcal M_{kl}$ terms, we split the sum into consecutive pairs and apply inequality (\ref{MONO2}) to every pair.
In the brackets with odd number of $\mathcal M_{kl}$ terms, we split the sum into consecutive pairs and the last term.
As the total number of $\mathcal M_{kl}$ terms is even,  the number of ``last terms'' is also even.
The last terms have a common last subsystem.  We again group them in pairs. We have $\frac{1}{2}{n \choose 2}$ pairs.
By the inequality (\ref{MONO2}) each  pair is upper bounded by $2$. Thus,  the sum is bounded by  ${n \choose 2}$ .

For an  odd  number of $\mathcal M_{kl}$ terms in $\mathcal M$, i.e. ${n \choose 2}$ odd, we first apply inequality (\ref{MONO3}) to its last three terms (in Eq. (\ref{Mx})  these terms are shown explicitly).
The number of remaining $\mathcal M_{kl}$ terms is even, and we proceed as before:
We use $\frac{1}{2}[{n \choose 2} - 3]$ inequalities (\ref{MONO2}) and one inequality (\ref{MONO3}). This again gives the  bound  ${n \choose 2}$ .

The bound  is tight. Take a  state $\otimes_{k=1}^n\ket{+}_{k} $, where $\ket{+}_k$ is the $+1$ eigenvalue eigenstate of  $\sigma_x$ of the $k$-th observer. For the state all ${n \choose 2}$ correlation tensor elements with two $1$ indices entering each $\mathcal M_{kl}$  are equal to $1$. $\Box$

Since the bound of $\mathcal M$ is attained by a product state,  $\mathcal M$ seems to be useless as an entanglement identifier.
However, this can be overcome by a choice of a suitable local coordinate system for each of the observers.
Namely, we shall say that the $k$-th observer uses her {\em preferred} Cartesian basis, if the Bloch vector of her qubit is pointing in the $z$ direction. 
For such set of local coordinates  the sum in $\mathcal M_{kl}$, see eq. (\ref{Mkl}), will be always zero for the above discussed product state, as the local Bloch vectors of the reduced density matrices, which were earlier $\hat{x}$, in  the prefered coordinates are by definition $\hat{z}$, and due to the nature of the state, the values entering $\mathcal M_{kl}$ factorize. For example, $T_{ij0...0}=T_{i00...0}T_{0j0...0}=b^{(1)}_ib^{(2)}_j$. Thus components for $i=1,2$ and $j=1,2$ in (\ref{Mkl}) are zero.

 Let us denote $\mathcal M$ in the new preferred set of coordinate systems by $\mathcal M^{(pb)}$. It is a sum of bipartite correlations involving only tensor components related to local  directions {\em orthogonal} to the local Bloch vectors. From now on the local observers do not have any freedom to define the $\hat{z}$ direction - it is set by the Bloch vector of the local reduced density operator. 
If a given qubit has vanishing Bloch vector, any axis can serve as the local $\hat z$ direction.

The quantity $\mathcal M^{(pb)}$ has the following property, which makes it useful as entanglement identifier.
\begin{propo}
For a given type of $k$-product state $|\psi_{k\mathrm{-prod}}\rangle = \otimes_{m=1,..,k}|\psi^{r_m}\rangle$  the following property holds:
\begin{equation}
\mathcal M^{(pb)}(\otimes_{m=1,..,k}|\psi^{r_m}\rangle) = \sum_{m=1}^{k}\mathcal M^{(pb)}(|\psi^{r_m}\rangle).
\label{N k-sep}
\end{equation}
\label{prop sum}
\end{propo}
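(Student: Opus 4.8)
\emph{Proof plan.} The plan is to split the double sum defining $\mathcal M^{(pb)}$ according to whether the two labelled qubits $k,l$ lie in the same tensor factor $|\psi^{r_m}\rangle$ or in two different factors, and to show that the ``different-factor'' contributions vanish identically while the ``same-factor'' contributions reproduce exactly the quantities $\mathcal M^{(pb)}(|\psi^{r_m}\rangle)$.

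First I would record the relevant factorization property of the extended correlation tensor of a product state: any component supported on a single block $m$ (all other indices equal to $0$) equals the corresponding component of $|\psi^{r_m}\rangle$, because the remaining blocks only contribute factors $\textrm{Tr}\,\openone$. In particular, if qubit $k$ sits in block $a$ and qubit $l$ in block $b\neq a$, the component appearing in $\mathcal M_{kl}$, eq.~(\ref{Mkl}), factorizes as $T_{0,\ldots,i_{(k)},\ldots,j_{(l)},\ldots,0}=b^{(k)}_i\, b^{(l)}_j$, the product of single-qubit Bloch-vector components — exactly as in the example $T_{ij0\ldots0}=b^{(1)}_i b^{(2)}_j$ already given in the text. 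A key observation here is that the one-qubit reduced states, hence the Bloch vectors that fix the preferred $\hat z$ axes, are the same whether computed from $|\psi_{k\mathrm{-prod}}\rangle$ or from the individual $|\psi^{r_m}\rangle$ (the blocks are uncorrelated, so single-qubit marginals are unchanged); thus the preferred basis of the global state restricts to the preferred basis of each block.

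With this in hand the argument is short. In the preferred basis $b^{(k)}_1=b^{(k)}_2=0$ for every qubit (and all Bloch components vanish when the reduced state is maximally mixed). Hence for $k,l$ in different blocks, $\mathcal M^{(pb)}_{kl}=\sum_{i,j=1,2}(b^{(k)}_i b^{(l)}_j)^2=\big(\sum_{i=1,2}(b^{(k)}_i)^2\big)\big(\sum_{j=1,2}(b^{(l)}_j)^2\big)=0$, so every inter-block term drops out of $\mathcal M^{(pb)}$. For $k,l$ in the same block $m$, the tensor components entering $\mathcal M_{kl}$ coincide with the corresponding components of $|\psi^{r_m}\rangle$ taken in its own preferred basis, so $\mathcal M^{(pb)}_{kl}$ equals the corresponding term of $\mathcal M^{(pb)}(|\psi^{r_m}\rangle)$. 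Collecting the surviving terms block by block yields $\mathcal M^{(pb)}(\otimes_m|\psi^{r_m}\rangle)=\sum_{m=1}^{k}\sum_{k<l\ \mathrm{in\ block}\ m}\mathcal M^{(pb)}_{kl}=\sum_{m=1}^{k}\mathcal M^{(pb)}(|\psi^{r_m}\rangle)$, which is (\ref{N k-sep}).

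I expect the only delicate point to be the bookkeeping of the preferred coordinates: one must verify that aligning each $\hat z$ with the \emph{global} one-qubit Bloch vector is consistent with performing the same alignment inside each block, which is precisely what the product structure guarantees. The vanishing-Bloch-vector case requires only the trivial remark that then even the $x,y$ components are zero, so such qubits contribute nothing to any inter-block $\mathcal M^{(pb)}_{kl}$ irrespective of the axis convention. Everything else is the routine factorization of expectation values on product states.
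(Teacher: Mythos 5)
Your proposal is correct and follows essentially the same route as the paper's proof: factorize the cross-block tensor components into products of single-qubit Bloch-vector $x,y$ components, observe that these vanish in the preferred bases, and conclude that $\mathcal M^{(pb)}$ decomposes block by block. Your extra remarks — that the global preferred bases restrict consistently to each factor and that same-block components coincide with those of the individual $|\psi^{r_m}\rangle$ — are points the paper leaves implicit, but they do not change the argument.
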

\begin{proof}
Note that in the sum $\mathcal M^{(pb)}(|\psi^{r_1}\rangle...|\psi^{r_k}\rangle)$ correlation tensor elements in formula (\ref{Mkl}) with index $i$ belonging to one subsystem (say $r_a$) and index $j$ belonging to another (say $r_b$) can be factorized, since they are effectively calculated for the product state $|\psi^{r_a}\rangle \otimes |\psi^{r_b}\rangle$. Thus,
\begin{equation}
T_{0,\ldots,0,i,0,\ldots,0,j,0,\ldots,0}=T^{(r_a)}_{0,\ldots,0,i,0,\ldots,0}T^{(r_b)}_{0,\ldots,0,j,0,\ldots,0}.
\end{equation}
But the factors on the right-hand-side are $x$ or $y$ components of  the single particle Bloch vectors, and thus vanish in the preferred Cartesian bases.  Thus, correlations between subsystems $r_a$ and $r_b$ are not present in $\mathcal M^{(pb)}(|\psi^{r_1}\rangle...|\psi^{r_k}\rangle)$. The sum can be decomposed into sums solely within the subsystems form $r_1$ to $r_k$. That is, we have (\ref{N k-sep}).\end{proof}
%\section{Entanglement conditions}

The following Proposition is a basis of our method of identifying entanglement:
\begin{propo}
For $n\geq 3$, and for any pure $n$-qubit state $|\psi\rangle$,   and any class $\mathcal S$ of k-product states  of the type $r_1+...+r_k=n$ the following holds:
\begin{equation}
|\psi\rangle \in \mathcal S \Longrightarrow \mathcal M^{(pb)}(|\psi\rangle)\leq\sum_{m=1}^k {r_m \choose 2}+d,
\label{PROP_KSEP}
\end{equation}
where $d$ is the number of terms in the sequence $r_1,\ldots,r_k$ with $r_m=2$. 
\label{teor k-sep}
\end{propo}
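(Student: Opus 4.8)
The plan is to combine the additivity of Proposition~\ref{prop sum} with the global monogamy bound of Proposition~\ref{teor M}, applied separately to each tensor factor. Since $|\psi\rangle\in\mathcal S$, we may write $|\psi\rangle=\otimes_{m=1}^{k}|\psi^{r_m}\rangle$ with $\sum_m r_m=n$, so Proposition~\ref{prop sum} gives at once
\begin{equation}
\mathcal M^{(pb)}(|\psi\rangle)=\sum_{m=1}^{k}\mathcal M^{(pb)}(|\psi^{r_m}\rangle),
\end{equation}
and it remains only to bound each summand in terms of $r_m$ alone. One should first remark that this decomposition is legitimate for the preferred bases: for a product state the reduced density operator of any single qubit belonging to the block $|\psi^{r_m}\rangle$ is obtained by tracing out only the remaining qubits of that same block, hence its Bloch vector — and therefore the local $\hat z$ direction fixed by it — is the same whether computed from the whole state or from the factor $|\psi^{r_m}\rangle$. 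Thus $\mathcal M^{(pb)}(|\psi^{r_m}\rangle)$ is exactly the within-block contribution to $\mathcal M^{(pb)}(|\psi\rangle)$.

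Next I would distinguish three cases for the block size $r_m$. If $r_m=1$ there are no bipartite correlations inside the block, so $\mathcal M^{(pb)}(|\psi^{r_m}\rangle)=0=\binom{1}{2}$. If $r_m\geq 3$, then $\mathcal M^{(pb)}(|\psi^{r_m}\rangle)$ is just the quantity $\mathcal M$ of Proposition~\ref{teor M} for the $r_m$-qubit pure state $|\psi^{r_m}\rangle$ evaluated in one particular choice of local coordinate systems; since the bound in Proposition~\ref{teor M} holds for every choice of local Cartesian axes (it rests only on the anticommutation algebra underlying the monogamy relations~(\ref{MONO1})--(\ref{MONO3})), we obtain $\mathcal M^{(pb)}(|\psi^{r_m}\rangle)\leq\binom{r_m}{2}$. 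Finally, if $r_m=2$ the same argument with the $n=2$ case of Proposition~\ref{teor M} gives $\mathcal M^{(pb)}(|\psi^{r_m}\rangle)\leq 2=\binom{2}{2}+1$.

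Summing these three types of estimate over $m$ and recalling that $d$ is precisely the number of indices $m$ with $r_m=2$, the extra $+1$ from each two-qubit block accumulates to $+d$, while all the remaining terms add up to $\sum_{m=1}^{k}\binom{r_m}{2}$; this is exactly the claimed bound~(\ref{PROP_KSEP}).

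As for the main obstacle: once Propositions~\ref{prop sum} and~\ref{teor M} are available, the argument is pure bookkeeping with no hard analytic step. The only point that genuinely needs care is the origin of the additive constant $d$ — it encodes the fact that a two-qubit block is allowed a ``surplus'' of one unit of bipartite correlations because the monogamy bound there is $2$ rather than $\binom{2}{2}=1$, and one must check that this surplus is counted once per two-qubit factor (through Proposition~\ref{teor M} applied to that block as a whole) and not per pair of qubits. A secondary, but essential, check is that Proposition~\ref{teor M} is genuinely coordinate-independent, so that it may be invoked in the preferred bases of the individual blocks.
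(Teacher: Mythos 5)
Your proof is correct and follows essentially the same route as the paper: additivity of $\mathcal M^{(pb)}$ over tensor factors (Proposition~\ref{prop sum}) combined with the per-block monogamy bound of Proposition~\ref{teor M}, with the $+d$ arising because a two-qubit block is bounded by $2$ rather than $\binom{2}{2}=1$. Your extra checks --- that the preferred bases of a block agree with those of the global product state, and that Proposition~\ref{teor M} holds in any local coordinates --- are points the paper leaves implicit, but they do not change the argument.
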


\emph{Proof}.
For any state $|\psi\rangle \in \mathcal S$ we have
\begin{eqnarray}
&\mathcal M^{(pb)}(|\psi\rangle)\leq\max_{|\psi_{\mathcal S}\rangle \in  \mathcal S} \mathcal M^{(pb)}(|\psi_{\mathcal S}\rangle)&\nonumber \\
 &= \sum_{m=1}^{k}\max_{|\psi^{r_m} \rangle  \in  \mathcal H^{(r_m)}}\left(\mathcal M^{(pb)}(|\psi^{r_m}\rangle)\right)&\nonumber \\
&= \sum_{m=1}^k {r_m \choose 2}+d,&\\ \nonumber
\end{eqnarray}
where $\mathcal H^{(r_m)}$ is the Hilbert space for subsystems $r_m$.
The first equality follows from Proposition \ref{prop sum} and the second from Proposition \ref{teor M}. The additional term $d$ comes from the fact that whenever $r_m=2$, the maximum of $\mathcal   M^{(pb)}(|\psi^{r_m}\rangle)$ is equal to $2$, and not ${2 \choose 2}=1$. $\Box$

 If $\mathcal M^{(pb)}(|\psi\rangle) > \sum_{m=1}^k {r_m \choose 2}+d$ then we know, thanks to the above criterion, that  the state cannot be a $k$-product state with respect to  the division  $r_1,\dots,r_k$.
The following Lemma  can be used to show  the maximum of the bound (\ref{PROP_KSEP}) over all possible splittings of $n$ parties into $k$ subsystems, i.e. all $k$-element decompositions of number $n$ into natural numbers.
%%%%%%%%%%%%%%%%%%
\begin{lem}
For any $n\geq 3$ and for any $k$ from $2$ to $n$, and for any sequence $\{r_m\}_{m=1}^k$ such that  and $\sum_{m=1}^k r_m =n$, the following conditions hold:
\begin{itemize}
	\item if all  $r_m\geq 1$, one has:
	$
	\sum_{m=1}^k {r_m \choose 2}\leq {n-k+1 \choose 2} ,
	$
	\item if all  $r_m\geq 0$, one has:
		$
	\sum_{m=1}^k {r_m \choose 2}\leq {n \choose 2}.
	$
	
\end{itemize}
\label{lem1}
\end{lem}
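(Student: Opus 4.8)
The plan is to reduce Lemma~\ref{lem1} to the elementary inequality that, for nonnegative reals $a_1,\ldots,a_k$, one has $\sum_m a_m^2 \leq \bigl(\sum_m a_m\bigr)^2$ (expand the square and drop the nonnegative cross terms). The first step is to rewrite the quantity being bounded by means of the identity ${r_m \choose 2} = \frac{1}{2}\bigl(r_m^2 - r_m\bigr)$, which gives $\sum_{m=1}^k {r_m \choose 2} = \frac{1}{2}\bigl(\sum_{m=1}^k r_m^2 - n\bigr)$, since $\sum_m r_m = n$. Thus both items amount to an upper bound on $\sum_m r_m^2$ under the respective constraint on the $r_m$, and integrality of the $r_m$ will play no role.

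For the second item ($r_m \geq 0$) I would apply the sum-of-squares inequality directly to the $r_m$: $\sum_m r_m^2 \leq \bigl(\sum_m r_m\bigr)^2 = n^2$. Feeding this into the identity above yields $\sum_m {r_m \choose 2} \leq \frac{1}{2}(n^2 - n) = {n \choose 2}$.

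For the first item ($r_m \geq 1$) I would absorb the lower bound by writing $r_m = 1 + s_m$ with $s_m \geq 0$ and $\sum_m s_m = n - k$. A short computation gives $\sum_m r_m^2 = \sum_m (1 + s_m)^2 = k + 2(n-k) + \sum_m s_m^2 = 2n - k + \sum_m s_m^2$, and applying the sum-of-squares inequality to the nonnegative numbers $s_m$ bounds $\sum_m s_m^2 \leq (n-k)^2$. Hence $\sum_m r_m^2 \leq 2n - k + (n-k)^2$, and therefore $\sum_m {r_m \choose 2} \leq \frac{1}{2}\bigl((n-k)^2 + (n-k)\bigr) = {n-k+1 \choose 2}$.

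I do not expect any real obstacle: the whole statement rests on one elementary inequality, and the hypotheses $n \geq 3$ and $2 \leq k \leq n$ are not actually needed for the bounds themselves. The only things to be careful about are the binomial-coefficient bookkeeping and checking that the shift $r_m = 1 + s_m$ faithfully encodes the constraint $r_m \geq 1$ together with $\sum_m r_m = n$. A useful consistency check is that both bounds are saturated, by the partitions $(n-k+1,1,\ldots,1)$ and $(n,0,\ldots,0)$ respectively, which is exactly where the cross terms in the sum-of-squares inequality vanish.
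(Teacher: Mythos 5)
Your proof is correct, and it takes a genuinely different (and in fact more rigorous) route than the paper's. Both start from the same reduction $\sum_m \binom{r_m}{2} = \frac{1}{2}\bigl(\sum_m r_m^2 - n\bigr)$, but the paper then asserts that $\sum_m r_m^2$ is maximized at the configuration $(n-k+1,1,\ldots,1)$ (resp.\ $(n,0,\ldots,0)$) and claims this "can be easily proved using Lagrange method of finding conditional extrema." That justification is shaky as stated: $\sum_m r_m^2$ is a convex function, so the stationary point that Lagrange multipliers locate on the affine constraint $\sum_m r_m = n$ is the \emph{minimum} (the uniform partition), while the maximum sits at a vertex of the feasible polytope and requires an extreme-point or exchange argument. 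You sidestep this entirely by bounding $\sum_m r_m^2$ directly via the elementary inequality $\sum_m a_m^2 \leq \bigl(\sum_m a_m\bigr)^2$ for nonnegative reals, using the shift $r_m = 1+s_m$ to absorb the constraint $r_m \geq 1$; the arithmetic $\frac{1}{2}\bigl((n-k)^2+(n-k)\bigr) = \binom{n-k+1}{2}$ checks out, as does the unshifted case giving $\binom{n}{2}$. Your version is fully self-contained, needs no calculus, works for real (not just integer) $r_m$, and exhibits the equality cases explicitly as the vanishing of the cross terms — which is exactly where the paper's claimed maximizers live. The only thing the paper's phrasing "buys" is brevity.
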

{\em Proof.} Note that the sum on the left-hand side of the above inequalities  can be expressed as:
\begin{equation}
\sum_{m=1}^k {r_m \choose 2}=\sum_{m=1}^k \frac{r_m(r_m-1)}{2}=\frac{1}{2}\left(\sum_{m=1}^k r_m^2-n\right)
\end{equation}
Let us consider the  case with all $r_m\geq 1$. The sum $\sum_{m=1}^k r_m^2$ is maximized, when all but one of the terms in the sequence $\{r_m\}_{m=1}^k$ are  equal to 1, and the remaing one  equals $n-k+1$.
This can be easily  proved using Lagrange method of finding conditional extrema. 

In the case of $0\leq r_m \leq n$ we can prove that with given boundary conditions, the sum is maximized when all but one terms are $0$, and the one equals $n$. $\Box$

%%%%%%%%%%%%%%%%%%%%%%
Results of the Lemma and Proposition \ref{teor k-sep} imply the following corollaries:
\begin{propo}
For $n\geq 3$, and  for any pure $n$-qubit state $|\psi\rangle$, the following implication holds. If
$\mathcal M^{(pb)}(|\psi\rangle)>s_k$ then $|\psi\rangle \textrm{ is not k-product},$
where  $s_k$ for $k=n-1$ is $2$, for $k=n-2$ is $4$, and
\begin{equation}
s_{k}={n-k+1 \choose 2}, \textrm{ for } k=2,...,n-3.
\end{equation}
\label{s seq}
\end{propo}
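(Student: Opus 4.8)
The plan is to read this Proposition as the contrapositive of Proposition~\ref{teor k-sep} combined with a maximization over partition types. First I would observe that if $|\psi\rangle$ is $k$-product at all, it is $k$-product of \emph{some} type $r_1+\dots+r_k=n$ with every $r_m\ge1$, so Proposition~\ref{teor k-sep} yields $\mathcal M^{(pb)}(|\psi\rangle)\le\sum_{m=1}^k\binom{r_m}{2}+d$, where $d=\#\{m:r_m=2\}$. Hence it suffices to show that $s_k$ equals the maximum of $\sum_{m=1}^k\binom{r_m}{2}+d$ over all such types, and to exhibit a type attaining it, so that the constant is exact and the resulting criterion is the strongest of this form.

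I would then dispose of the two boundary values by direct enumeration, precisely because there the term $d$ is decisive. For $k=n-1$ the constraint $\sum r_m=n$ with all $r_m\ge1$ forces exactly one entry equal to $2$ and the rest equal to $1$, so $\sum_{m=1}^k\binom{r_m}{2}+d=1+1=2$. For $k=n-2$ the only types are $(3,1,\dots,1)$, with value $3+0=3$, and $(2,2,1,\dots,1)$, with value $1+1+2=4$; the maximum is $4$.

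For $2\le k\le n-3$ I would invoke Lemma~\ref{lem1} (first item), which gives $\sum_{m=1}^k\binom{r_m}{2}\le\binom{n-k+1}{2}$ with equality for the type $(n-k+1,1,\dots,1)$; since $k\le n-3$ forces $n-k+1\ge4$, this extremal type has $d=0$, so $\binom{n-k+1}{2}$ is already attained by $\sum\binom{r_m}{2}+d$, and it only remains to verify $\sum_{m=1}^k\binom{r_m}{2}+d\le\binom{n-k+1}{2}$ for \emph{every} type. I would prove this by splitting a type into its $p$ entries $\ge3$, its $d$ entries equal to $2$, and its entries equal to $1$, replacing the $\ge3$ entries by their least favourable arrangement (one large entry, the rest equal to $3$), and reducing the inequality to an elementary estimate that holds precisely because $n-k\ge3$, the subcases $p=0$ and $p=1$ being immediate.

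The step I expect to be the main obstacle is this last verification. A naive ``concentrate all the weight in one block'' smoothing does \emph{not} work: for $n=6$, $k=3$ both $(4,1,1)$ and $(2,2,2)$ attain the maximum value $6$, and the exchange $(2,2,\cdot)\mapsto(3,2,\cdot)$ strictly \emph{decreases} $\sum\binom{r_m}{2}+d$, so the optimizer is non-unique and single exchange moves are not monotone. One must instead control the deficit $\binom{n-k+1}{2}-\sum\binom{r_m}{2}$ produced by each entry equal to $2$ and show it always outweighs the corresponding unit of $d$; equivalently, the tailored exchanges $(\dots,2,2,1,\dots)\mapsto(\dots,4,1,1,\dots)$ and $(\dots,2,s,\dots)\mapsto(\dots,1,s{+}1,\dots)$ for $s\ge3$ never decrease the quantity and push every type either to $d=0$, where Lemma~\ref{lem1} finishes it, or to a configuration with no entry $\ge3$, which forces $n-k\le2$ and is covered by the boundary cases.
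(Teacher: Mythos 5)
Your overall route is the same as the paper's: the paper offers no separate argument for this Proposition, merely asserting that it follows from Lemma~\ref{lem1} and Proposition~\ref{teor k-sep}, with the boundary values $s_{n-1}=2$ and $s_{n-2}=4$ absorbing the $+d$ correction. You correctly read the statement as ``$s_k=\max\bigl(\sum_m\binom{r_m}{2}+d\bigr)$ over all $k$-part types,'' your enumeration of the cases $k=n-1$ and $k=n-2$ is right, and you correctly identify the point the paper glosses over: for $2\le k\le n-3$ one must still check that no type with $d>0$ beats the $d=0$ extremizer $(n-k+1,1,\dots,1)$, i.e.\ that $\sum_m\binom{r_m}{2}+d\le\binom{n-k+1}{2}$ for every type. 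Your two exchange moves do increase the objective ($\binom{s+1}{2}-\binom{s}{2}=s\ge 3>2$ for the second move, $6>4$ for the first), so the smoothing itself is sound.

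There is, however, one concrete hole in your terminal-case analysis. You claim that a configuration on which neither exchange applies and which has no entry $\ge 3$ ``forces $n-k\le 2$.'' That is false: the all-twos partition $(2,2,\dots,2)$ with $n=2k$ admits neither move (no entry $\ge 3$ for the second, no entry equal to $1$ for the first), yet has $n-k=k$, which is $\ge 3$ precisely in the regime $k\le n-3=2k-3$ you are treating --- indeed your own example $(2,2,2)$ for $n=6$, $k=3$ is such a terminal point. The case is not covered by your boundary analysis and must be checked directly; fortunately it satisfies the bound, since its value is $k\cdot 1+k=2k=n$ while $\binom{n-k+1}{2}=\binom{k+1}{2}=k(k+1)/2\ge 2k$ exactly when $k\ge 3$, which is guaranteed by $k\le 2k-3$. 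With that one additional check (the only terminal configurations are $d=0$, or $d=1$ with all other entries $1$ forcing $k=n-1$, or all entries equal to $2$), your argument closes and in fact supplies the verification the paper leaves implicit.
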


\begin{propo}
For $n\geq 5$, and for any n-qubit pure state $|\psi\rangle$,  if
$$\mathcal M^{(pb)}(|\psi\rangle)>{n-1 \choose 2}$$
then $|\psi\rangle$ is genuinely n-partite entangled.
For $n=3$ and $n=4$ we have respectively:
$$\mathcal M^{(pb)}(|\psi\rangle)>2 \Longrightarrow |\psi\rangle \textrm{ is genuinely 3-partite entangled.}$$
$$\mathcal M^{(pb)}(|\psi\rangle)>4 \Longrightarrow |\psi\rangle \textrm{ is genuinely 4-partite entangled.}$$
\label{teor main}
\end{propo}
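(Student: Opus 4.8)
\emph{Proof proposal.}
The plan is to reduce the statement to the single negative assertion ``not $2$-product'' and then invoke the bounds already established. Recall first the standard fact that a pure $n$-partite state is genuinely $n$-partite entangled if and only if it is \emph{not} biseparable, that is, it cannot be written as $|\psi_{r_1}\rangle\otimes|\psi_{r_2}\rangle$ across any bipartition $r_1+r_2=n$ with $r_1,r_2\geq 1$; in the language of Eq.~(\ref{kprodukt}) this says precisely that it is not a $2$-product state of any type. Moreover every $k$-product state with $k\geq 2$ is in particular a $2$-product state (group the $k$ factors into two blocks), so among all statements ``$|\psi\rangle$ is not $k$-product'' the case $k=2$ is the strongest, and it coincides with genuine multipartite entanglement. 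Hence it suffices to exhibit a threshold on $\mathcal M^{(pb)}(|\psi\rangle)$ above which $|\psi\rangle$ fails to be $2$-product for \emph{every} bipartition simultaneously.

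First I would simply apply Proposition~\ref{s seq} with $k=2$. For $n\geq 5$ one has $2\leq n-3$, so the relevant value is $s_2={n-1 \choose 2}$, and $\mathcal M^{(pb)}(|\psi\rangle)>{n-1 \choose 2}$ excludes all $2$-product decompositions at once, yielding genuine $n$-partite entanglement. For $n=3$ we have $k=2=n-1$, so the relevant value is $s_{n-1}=2$; for $n=4$ we have $k=2=n-2$, so the relevant value is $s_{n-2}=4$. This already gives the three displayed implications. If instead one prefers a derivation not routed through Proposition~\ref{s seq}, I would, for each $2$-product type $(r_1+r_2)$, bound $\mathcal M^{(pb)}$ by ${r_1 \choose 2}+{r_2 \choose 2}+d$ using Proposition~\ref{teor k-sep}, where $d\in\{0,1,2\}$ counts the blocks of size $2$, and then maximize over $r_1+r_2=n$, $r_1,r_2\geq 1$. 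The map $r_1\mapsto{r_1 \choose 2}+{n-r_1 \choose 2}$ is convex and symmetric about $r_1=n/2$, hence maximal at the endpoint $r_1=1$ (equivalently by Lemma~\ref{lem1}), where it equals ${n-1 \choose 2}$ with $d=0$ as long as $n\geq 4$. The only configurations that could compete carry a block of size $2$: for $r_1=2$, $r_2=n-2$ with $n\geq 5$ one has $d=1$ and bound ${n-2 \choose 2}+2$, which is strictly below ${n-1 \choose 2}$ because ${n-1 \choose 2}-{n-2 \choose 2}=n-2\geq 3$, and introducing further blocks of size $2$ only enlarges the deficit in the binomial part. Thus the maximum over all bipartitions is ${n-1 \choose 2}$ when $n\geq 5$. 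For $n=4$ the partition $(2+2)$ is exceptional: it has $d=2$, giving $1+1+2=4$, which exceeds the $(1+3)$ value ${3 \choose 2}=3$, so the threshold rises to $4$; for $n=3$ the only $2$-product type is $(1+2)$, with bound $0+1+1=2$.

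The one point that requires care --- and the main obstacle --- is the interplay between the binomial sum and the additive term $d$ in this maximization: the bound is \emph{not} $\max_r\left(\sum_m {r_m \choose 2}\right)+\max_r d$, because the partition realizing the maximal binomial sum has $d=0$, while every partition with $d>0$ pays a strictly larger price in the binomial part --- \emph{except} for $n=4$, where the loss incurred by using two blocks of size $2$ is exactly offset by $d=2$. This exact cancellation is precisely why the cases $n=3$ and $n=4$ must be singled out, and why the general bound ${n-1 \choose 2}$ is claimed only for $n\geq 5$.
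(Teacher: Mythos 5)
Your proposal is correct and follows essentially the same route as the paper, which presents this Proposition as a direct corollary of Lemma~\ref{lem1} and Proposition~\ref{teor k-sep} (equivalently, the $k=2$ case of Proposition~\ref{s seq}): identify genuine $n$-partite entanglement of a pure state with failure of every $2$-product decomposition, then maximize the bound ${r_1 \choose 2}+{r_2 \choose 2}+d$ over bipartitions, with the $(2+2)$ partition forcing the separate threshold $4$ at $n=4$ and the $(1+2)$ partition giving $2$ at $n=3$. Your explicit bookkeeping of the competition between the binomial sum and the additive term $d$ is exactly the content the paper leaves implicit.
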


Our last condition for multipartite entanglement is given by the following proposition.
\begin{propo}
For any n-qubit pure state $|\psi\rangle$, with $n\geq 5$, and for any $m\leq \left\lfloor \frac{n}{2}\right\rfloor-1$ the following holds:
   \begin{eqnarray*}
   &&\mathcal M^{(pb)}(|\psi\rangle)>{m \choose 2}+{n-m \choose 2}+\delta_{m,2}  \nonumber\\
   &&\Longrightarrow |\psi\rangle \textrm{ is genuinely $m$--partite entangled.}
   \end{eqnarray*}
\label{teor main1}
\end{propo}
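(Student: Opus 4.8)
\emph{Plan.} The plan is to prove the contrapositive: if $|\psi\rangle$ is \emph{not} genuinely $m$-partite entangled, then $\mathcal M^{(pb)}(|\psi\rangle)\le{m \choose 2}+{n-m \choose 2}+\delta_{m,2}$. Such a state is a tensor product $|\psi\rangle=|\psi^{r_1}\rangle\otimes\cdots\otimes|\psi^{r_k}\rangle$ of pure states each supported on at most $m-1$ qubits, so $r_i\le m-1$ for all $i$ and $\sum_{i=1}^k r_i=n$ (and necessarily $k\ge 2$, since $n>m-1$). Thus $|\psi\rangle$ belongs to a class $\mathcal S$ of type $r_1+\dots+r_k=n$, and Proposition \ref{teor k-sep} applies directly, giving
\[
\mathcal M^{(pb)}(|\psi\rangle)\ \le\ \sum_{i=1}^k{r_i \choose 2}+d\ =\ \sum_{i=1}^k\Big[{r_i \choose 2}+\delta_{r_i,2}\Big],
\]
with $d$ the number of blocks of size $2$.

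It then remains to estimate the right-hand side combinatorially, using $r_i\le m-1$, $\sum_i r_i=n$ and $m\le\lfloor n/2\rfloor-1$. I would argue termwise: ${r \choose 2}+\delta_{r,2}\le r^2/2$ holds for every non-negative integer $r$, and $r_i^2\le (m-1)r_i$ since $r_i\le m-1$; summing these two estimates yields $\mathcal M^{(pb)}(|\psi\rangle)\le\tfrac{1}{2}\sum_i r_i^2\le\tfrac{(m-1)n}{2}$. Finally, the hypothesis $m\le\lfloor n/2\rfloor-1$ gives $2m\le n-2<n$, so $(2m-n)(m-n)=m^2+(n-m)^2-mn\ge 0$; dividing by $2$ and subtracting $n/2$ turns this into $\tfrac{(m-1)n}{2}\le{m \choose 2}+{n-m \choose 2}\le{m \choose 2}+{n-m \choose 2}+\delta_{m,2}$. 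Chaining the inequalities proves the contrapositive, hence the proposition.

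\emph{Main obstacle.} The point to watch is that these crude per-block estimates already suffice: unlike in Lemma \ref{lem1}, one need not determine the partition that maximizes $\sum_i[{r_i \choose 2}+\delta_{r_i,2}]$, because both ${r \choose 2}+\delta_{r,2}\le r^2/2$ and $r^2\le(m-1)r$ hold for each admissible block separately. The other thing to isolate is exactly where the assumption $m\le\lfloor n/2\rfloor-1$ enters: it is precisely the regime in which $m^2+(n-m)^2$ (equivalently ${m \choose 2}+{n-m \choose 2}$) dominates $\tfrac{(m-1)n}{2}$, since $x\mapsto x^2+(n-x)^2$ decreases on $[0,n/2]$; for $m$ near $n$ this ordering reverses and one must instead fall back on the sharper arguments behind Propositions \ref{teor main} and \ref{s seq}. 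A small preliminary is to confirm that ``not genuinely $m$-partite entangled'' coincides with ``a tensor product of pure states on at most $m-1$ qubits each,'' which holds because a genuinely entangled block of the finest product decomposition admits no further splitting.
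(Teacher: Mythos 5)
Your argument is correct and proves the proposition as literally stated, but by a genuinely different route from the paper's. The paper does not pass to the contrapositive or to the finest product decomposition: it compares the thresholds of Proposition \ref{teor k-sep} for \emph{bipartite} splittings $(r_1+(n-r_1))$ only, shows these thresholds decrease as the cut becomes more balanced (difference between adjacent cuts at least $n-2r_1-2$), and concludes from $\mathcal M^{(pb)}$ exceeding the threshold of the cut $(m+(n-m))$ that the state cannot be a biproduct across any cut at least as balanced as that one. You instead apply Proposition \ref{teor k-sep} to the decomposition into genuinely entangled blocks of size at most $m-1$ and close with the termwise estimates ${r \choose 2}+\delta_{r,2}\leq r^2/2\leq (m-1)r/2$ and the identity $2{m \choose 2}+2{n-m \choose 2}-(m-1)n=(n-2m)(n-m)\geq 0$; every step checks out, and your version is more self-contained --- it avoids any extremal analysis over partitions and makes explicit the inference from ``not genuinely $m$-partite entangled'' to a product of small blocks, which the paper leaves unspoken.

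One caveat: the paper's statement, its proof, and its Dicke-state application do not all assert the same conclusion. The statement says ``genuinely $m$-partite entangled,'' the proof concludes ``entanglement between at least $m+1$ parties,'' and the corollary that $|D_n^{(n-1)/2}\rangle$ is genuinely $\frac{n+3}{2}$-partite entangled (obtained from $m=\frac{n-3}{2}$) requires the stronger reading ``at least $n-m$ mutually entangled parties.'' You prove the weakest of these readings --- a block of at least $m$ mutually entangled qubits --- which is exactly what the quoted statement asserts, so your proof is valid for it. But your bound $\mathcal M^{(pb)}\leq\frac{(m-1)n}{2}$ is too lossy to yield the stronger versions: for blocks of size up to $n-m-1$ the analogous estimate $\frac{(n-m-1)n}{2}$ exceeds ${m \choose 2}+{n-m \choose 2}$, so recovering what the paper actually uses later would force you back to a sharper extremal argument in the spirit of Lemma \ref{lem1} (the maximizing two-block partition being $((m+1)+(n-m-1))$).
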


\emph{Proof}.
For any $n$-partite state consider all possible bipartite divisions $(r_1+r_2)$, where $r_1\leq r_2$ and $r_1+r_2=n$. Note that the $\mathcal M^{(pb)}$ values corresponding to different divisions are in the following order:
\begin{eqnarray}
\mathcal M^{(pb)}\left(\ket{\psi^1} \ket{\psi^{n-1}}\right)>\mathcal M^{(pb)}\left(\ket{\psi^2} \ket{\psi^{n-2}}\right)>... \nonumber \\>\mathcal M^{(pb)}\left(\ket{\psi^{\lfloor n/2 \rfloor}} \ket{\psi^{\left\lceil n/2\right\rceil}}\right).
\end{eqnarray}
Indeed, for any $r_1\in \left[1,\left\lfloor \frac{n}{2} \right\rfloor-1\right]$ and $n \ge 5$ (such that division $(2+2)$ is excluded) the difference of adjacent divisions is strictly positive:
\begin{eqnarray}
&&\mathcal M^{(pb)}\left(\ket{\psi^{r_1}} \ket{\psi^{n-r_1}}\right)-\mathcal M^{(pb)}\left(\ket{\psi^{r_1+1}} \ket{\psi^{n-r_1-1}}\right) \nonumber \\
&=&{r_1 \choose 2}+\delta_{r_1,2}+{n-r_1 \choose 2}+\delta_{n-r_1,2} \nonumber \\
&-&{r_1+1 \choose 2}-\delta_{r_1+1,2}-{n-r_1-1 \choose 2}-\delta_{n-r_1-1,2} \nonumber \\
&\ge & n-2r_1-2 > 0
\end{eqnarray}
If $\mathcal M^{(pb)}(|\psi\rangle)>\mathcal M^{(pb)}\left(\ket{\psi^m} \ket{\psi^{n-m}}\right)$ then, due to the ordering of divisions, the state $\psi$ contains entanglement between at least $m+1$ parties.$\Box$

%\section{Applications to detection of multipartite entanglement of Dicke states}

As an example of an application of the conditions
let us consider a family of $n$-partite Dicke states with $e$ excitations:
\begin{equation}
|D_n^{e}\rangle=\frac{1}{\sqrt{{n \choose e}}}\sum_{\pi}|\pi(\underbrace{1,...,1}_{e},\underbrace{0,...,0}_{n-e})\rangle,
\label{DickeNe}
\end{equation}
where summation is performed over all combinations. It can be  shown that for odd $n$
\begin{equation}
\mathcal M^{(pb)}(|D_n^e\rangle)=\frac{4e^2(n-e)^2}{n(n-1)}.
\label{Mne}
\end{equation}
This expression is maximized for states of the type $D_n^{(n-1)/2}$ for which
\begin{equation}
\mathcal M^{(pb)}\left(|D_n^{(n-1)/2}\rangle\right)=\frac{(n+1)^2 (n-1)}{4n}.
\label{Mnmax}
\end{equation}
Using our conditions for entanglement we can prove the following properties of the family.
States $D_3^1$ and $D_5^2$ are genuinely multipartite entangled.  This follows from Proposition \ref{teor main}.
States $D_n^{(n-1)/2}$ for $n$ is odd, are genuinely $\left(\frac{n+3}{2}\right)$--partite entangled. This is a consequence of Proposition \ref{teor main1}.

%\section{Discussion}

The fact that the  multipartite entanglement of Dicke states can be detected with lower order correlations is itself not surprising, since it is known that these states are uniquely determined by their reduced density matrices \cite{PR_PRA09,PR09}.
Namely, a Dicke state with $e$ excitations is the only state compatible with its $2e$-partite reduced density operators \cite{PR_PRA09,PR09}.  With an increasing  $e$ one has a  growing dependence on multipartite correlations.
It is thus intriguing that the conditions we derive, although solely based on bipartite correlations, detect better  entanglement of Dicke states with higher excitation number.
One reason for this could be that entanglement is just one of many defining features of a quantum state and for this reason knowledge of the full state is not necessary to detect entanglement. 

A different set of conditions based on bipartite correlations,  proposed in ref. \cite{GEZA},  uses spin operators.
They are linear in correlations whereas ours are quadratic.
Quadratic conditions have an advantage: once the bound for entanglement is achieved we can stop measuring  as any future measurement can only make increase the sum of squared correlations.
For the linear conditions all correlations have to be measured as  correlations measured in the next experimental setting could be negative.

Note that  Dicke states are genuinely $n$-party entangled. The fact that our criterion does not detect this seems to be a drawback.
A question emerges: could bipartite correlations be used to detect this entanglement?
Note that for a biproduct state there exist two subsystems, say $k$-th and $l$-th with correlations satisfying $T_{ij}=b_j^{(k)}b_j^{(l)}$ as every correlation is just a product of local averages.
Therefore, if Dicke state were biproduct then some of its bipartite correlations would have to satisfy this relation. 

We would like to stress that the conditions derived here do not make any additional assumptions about studied states, except  purity, and the Dicke states should be treated just as a simple example of their application.

%\acknowledgments

TP acknowledges discussions with Minh Cong Tran.
The work is supported by Polish Ministry of Science and Higher
Education Grant no. IdP2011 000361.
MM is supported by  International PhD
Project ``Physics of future quantum-based information technologies''
grant MPD/2009-3/4 of Foundation for Polish Science (FNP).  WL and M\.Z
are supported by  FNP
TEAM project co-financed by  EU Regional Development Fund.
TP is supported by  National Research Foundation, Ministry of Education of Singapore, and NTU start-up grant.

\end{document}